\documentclass[10pt]{article}

\usepackage[letterpaper, margin=0.85in]{geometry}
\usepackage{amsmath, amssymb, amsthm, mathtools}
\usepackage{bm}
\usepackage{enumitem}
\usepackage{booktabs}
\usepackage{array}
\usepackage{multirow}
\usepackage{graphicx}
\usepackage{caption}
\usepackage[round]{natbib}
\usepackage{hyperref}
\usepackage{xcolor}
\usepackage{titlesec}

\hypersetup{colorlinks=true, linkcolor=black, citecolor=black!70!blue, urlcolor=blue!60!black}
\titleformat{\section}{\large\bfseries}{\thesection}{0.8em}{}
\titleformat{\subsection}{\normalsize\bfseries}{\thesubsection}{0.7em}{}

\newtheorem{theorem}{Theorem}

\newtheorem{lemma}[theorem]{Lemma}
\newtheorem{corollary}[theorem]{Corollary}
\theoremstyle{definition}
\newtheorem{assumption}{Assumption}
\newtheorem{hypothesis}{Hypothesis}
\theoremstyle{remark}
\newtheorem{remark}{Remark}

\DeclareMathOperator{\op}{op}
\newcommand{\R}{\mathbb{R}}
\newcommand{\sisdr}{\mathrm{SI\text{-}SDR}}
\newcommand{\sig}{\widehat{s}}

\title{\vspace{-2.5em}
A Variational-Flow Analysis of Diffusion-Based Speech Enhancement under Noise-Power Mismatch}
\author{Shuubham Ojha\\\textit{University of Maryland, College Park}}
\date{}

\begin{document}
\maketitle
\vspace{-1.5em}

\begin{abstract}
Diffusion-based speech enhancement architectures pairing a deterministic predictor with a learned score network, such as StoRM~\citep{lemercier2023storm}, degrade with a visible corner in the SI-SDR curve when the noise amplitude is scaled away from training. We give a pathwise variational analysis of this phenomenon. Differentiating the reverse SDE with respect to the noise-scaling parameter $M$ yields an exact factorization of the output sensitivity, $\partial \sig^{(M)}/\partial M = K(M)\cdot \partial C_M/\partial M$, where $K(M)$ is determined by the score Jacobian along the reverse trajectory and $C_M = \Pi(y^{(M)})$ is the predictor output. Under three hypotheses on the reverse-process flow: continuity of the score Jacobian in $M$, continuity of the conditioning Jacobian in $M$, and non-degeneracy of $K(M^\ast)$, this factorization gives an iff statement: $M\mapsto \sig^{(M)}$ fails to be $C^1$ at $M^\ast$ if and only if $M\mapsto \Pi(y^{(M)})$ does. We prove the theorem and its discrete-sampler counterpart, and empirically substantiate each ingredient on a StoRM model trained on VoiceBank--DEMAND: the kink is present per-utterance and in the predictor-only output (piecewise log-linear fit $R^2 = 0.9994$ vs.~$0.9935$ for the best smooth fit); the score-only ablation is smooth; the score-network Jacobian is continuous in $M$ along the trajectory; and $K(M)$ is well-conditioned across the operating range, with $\sigma_{\min}(K(M))$ bounded well away from zero. The predictor stage is thus identified as the structural source of noise-power non-smoothness in this class of architectures.
\end{abstract}

\section{Introduction}

The pairing of a discriminative predictor with a score-based diffusion refinement stage has become a standard recipe for high-quality speech enhancement. In StoRM~\citep{lemercier2023storm}, a predictor $\Pi$ produces $C = \Pi(y)$ from the noisy observation $y$, and a reverse SDE is run on a learned score $s_\theta(x,t,C)$ with $C$ as conditioning and $X_T = C + \sigma_T \varepsilon$ as initialization. The combined system outperforms both stages in isolation on standard benchmarks~\citep{lemercier2023storm,richter2023sgmse}, and variants of this architecture form the state of the art in the field.

How such systems behave when the noise amplitude departs from the training condition is not well understood theoretically. Empirically the picture is striking: at the per-utterance level, the SI-SDR degradation curve as a function of noise scaling has a visible corner at the training amplitude: the slope changes discontinuously across it. The corner is a per-utterance phenomenon; large-population averages smooth it out because per-utterance corner locations drift slightly around the training amplitude.

We ask which component of the architecture produces this corner. The answer we give is theoretical: differentiating the learned reverse process with respect to the noise-scaling parameter $M$ leaves the reverse dynamics dependent on $M$ only through the predictor output $C_M$, and parameter-dependence of ODE flows yields an exact multiplicative decomposition of the output sensitivity. The multiplicative structure sharpens the localization to an iff: any non-smoothness in $M \mapsto \sig^{(M)}$ can only originate in $M \mapsto \Pi(y^{(M)})$, provided the score-network Jacobians along the trajectory are continuous in $M$ and $K(M^\ast)$ does not annihilate the predictor's jump direction. Both conditions are directly measurable and hold for the model we study.

The analysis is distinct from the path-space KL machinery used for oracle-comparison bounds in diffusion theory~\citep{chen2022sampling,benton2024nearly}: those methods bound one path measure against another via Girsanov, producing additive predictor--score error decompositions at the population level. Our object is parametric, a single learned process indexed by $M$. Thus, the natural machinery is ODE parameter-dependence, which gives a pathwise, multiplicative decomposition of a specific functional of $X_0$. The two frameworks answer different questions.

\paragraph{Contributions.}
\begin{enumerate}[label=\arabic*.,leftmargin=*,itemsep=2pt,topsep=2pt]
\item An exact pathwise factorization $V_0 = K(M) \cdot \mathrm{d}C_M/\mathrm{d}M$ of the output sensitivity to $M$ (Lemma~\ref{lem:variational}).
\item An iff localization theorem attributing non-smoothness of the enhancement output to non-smoothness of the predictor (Theorem~\ref{thm:kink}), and its extension to the discrete Euler--Maruyama sampler (Corollary~\ref{cor:discrete}).
\item Empirical validation on a 32-channel StoRM model: direct observation of the per-utterance kink and the predictor-only kink; verification that the score-Jacobian is continuous in $M$; verification that $K(M)$ is well-conditioned with $\sigma_{\min}(K(M))$ bounded away from zero across the operating range.
\end{enumerate}


\section{Setup}
\label{sec:setup}

\subsection{Notation and the three-process formulation}
Let $s \in \R^d$ be a clean speech signal and $n \in \R^d$ additive noise. For $M > 0$ set
\begin{equation}
y^{(M)} := s + \sqrt{M}\, n,
\end{equation}
so $M = 1$ recovers the training-time noise amplitude. Both $s$ and $n$ are held fixed; all $M$-dependence is parametric. The predictor produces $C_M := \Pi(y^{(M)})$, and the reverse SDE is
\begin{equation}
\mathrm{d} X_t^{(M)} = b_\theta(X_t^{(M)}, t, C_M)\,\mathrm{d}t + g(t)\,\mathrm{d}W_t,\qquad t \in [T, 0],
\end{equation}
with drift $b_\theta(x,t,C) = f(x,t) - g(t)^2 s_\theta(x,t,C)$, initial condition $X_T^{(M)} = C_M + \sigma_T \varepsilon$, and $W_t$, $\varepsilon$ held fixed across $M$. The enhanced output is $\sig^{(M)} := X_0^{(M)}$.

We work with the SI-SDR curve $\Psi(M) := \sisdr(\sig^{(M)}, s)$ but state everything for $\sig^{(M)}$ itself; SI-SDR is smooth away from the origin and inherits kinks from $\sig^{(M)}$ by the chain rule.

\subsection{Assumptions}

\begin{assumption}[Single-channel $M$-dependence]
\label{ass:single-channel}
The drift $b_\theta$, diffusion coefficient $g(\cdot)$, schedule $\sigma_\cdot$, and Brownian motion $W_\cdot$ are $M$-independent as functions of their explicit arguments. The map $M \mapsto C_M$ is the sole channel through which $M$ enters the reverse dynamics.
\end{assumption}

For StoRM with \texttt{condition="both"} the score network receives $C = (y, \Pi(y))$ as a joint conditioning input, so both channels of $C$ depend on $M$: $y^{(M)}$ smoothly (as $\sqrt{M}\, n$), and $\Pi(y^{(M)})$ potentially not. The variational identity below still factors cleanly because the smooth channel cannot contribute non-smoothness; we return to this after Lemma~\ref{lem:variational}.

\begin{assumption}[Regularity]
\label{ass:regularity}
The drift $b_\theta$ is jointly $C^2$ in $(x, C)$, with $\nabla_x b_\theta$, $\nabla_C b_\theta$ and their derivatives bounded on the compact set traced out by $\{(X_t^{(M)}, t, C_M): t \in [0,T],\, M \in \mathcal{M}\}$ for any compact $\mathcal{M} \subset (0,\infty)$.
\end{assumption}

For NCSN++ with smooth activations, Assumption~\ref{ass:regularity} is a standard consequence of finite weights and smooth composition. Pillar~P1 (\S\ref{sec:p1}) directly confirms Jacobian boundedness along the trajectory in our experiments.

\section{Variational-flow analysis}
\label{sec:varflow}

Under Assumption~\ref{ass:regularity}, $M \mapsto X_t^{(M)}$ is $C^1$ wherever $M \mapsto C_M$ is $C^1$~\citep[Ch.~V]{hartman2002ode}. Define the sensitivity $V_t := \partial X_t^{(M)}/\partial M$.

\begin{lemma}[Variational identity]
\label{lem:variational}
Under Assumptions~\ref{ass:single-channel} and \ref{ass:regularity}, $V_t$ satisfies
\begin{equation}
\label{eq:var-ode}
\frac{\mathrm{d}V_t}{\mathrm{d}t} = \nabla_x b_\theta(X_t^{(M)},t,C_M)\, V_t + \nabla_C b_\theta(X_t^{(M)},t,C_M)\, \frac{\mathrm{d}C_M}{\mathrm{d}M},
\end{equation}
with terminal condition $V_T = \mathrm{d}C_M/\mathrm{d}M$. Let $\Phi(t,\tau)$ denote the state-transition matrix of the homogeneous part along the trajectory, with $\Phi(\tau,\tau)=I$. Then
\begin{equation}
\label{eq:duhamel}
V_0 = \underbrace{\left[\Phi(0,T) + \int_T^0 \Phi(0,\tau)\, \nabla_C b_\theta(X_\tau^{(M)},\tau,C_M)\,\mathrm{d}\tau \right]}_{K(M)} \cdot \frac{\mathrm{d}C_M}{\mathrm{d}M}.
\end{equation}
\end{lemma}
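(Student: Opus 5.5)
The plan is to fix the Brownian path $W_\cdot$ and the seed $\varepsilon$, so that the reverse SDE becomes a pathwise random ODE in integral form. Because the noise enters additively through $g(t)\,\mathrm{d}W_t$ and $g$ does not depend on $M$ (Assumption~\ref{ass:single-channel}), we can write
\begin{equation*}
X_t^{(M)} = C_M + \sigma_T\varepsilon + \int_T^t b_\theta(X_\tau^{(M)},\tau,C_M)\,\mathrm{d}\tau + \int_T^t g(\tau)\,\mathrm{d}W_\tau .
\end{equation*}
The stochastic integral is independent of $M$. This is also the cleanest setting for the differentiability claim quoted from \citep[Ch.~V]{hartman2002ode}. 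Subtracting the $M$-free process $Z_t := X_t - \int_T^t g\,\mathrm{d}W$ gives a genuine ODE for $Z_t$, whose drift $b_\theta(Z_t + \int g\,\mathrm{d}W, t, C_M)$ is continuous in $t$ and $C^2$ in $(Z, C)$. Standard parameter-dependence theory therefore gives $C^1$ dependence on $M$ wherever $M\mapsto C_M$ is $C^1$.

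Next I differentiate the integral equation in $M$. Assumption~\ref{ass:regularity} supplies bounded derivatives on the relevant compact set, so dominated convergence justifies passing $\partial_M$ under $\int\mathrm{d}\tau$. This yields
\begin{equation*}
V_t = \frac{\mathrm{d}C_M}{\mathrm{d}M} + \int_T^t \Big[\nabla_x b_\theta\, V_\tau + \nabla_C b_\theta\,\frac{\mathrm{d}C_M}{\mathrm{d}M}\Big]\mathrm{d}\tau ,
\end{equation*}
which is the integrated form of \eqref{eq:var-ode}, with $V_T = \mathrm{d}C_M/\mathrm{d}M$. The terminal condition comes from differentiating $X_T^{(M)} = C_M + \sigma_T\varepsilon$, since $\sigma_T$ and $\varepsilon$ are $M$-free. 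The conditioning input in StoRM is $C=(y,\Pi(y))$. In that case $\mathrm{d}C_M/\mathrm{d}M$ should be read as the derivative of the full conditioning vector. The term $\mathrm{d}C_M/\mathrm{d}M$ in the initial condition should be read as the derivative of the component that actually enters $X_T$, which amounts to a projection; I will note this without changing the argument.

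Finally I solve the linear inhomogeneous ODE by variation of constants. Let $\Phi(t,\tau)$ solve $\partial_t\Phi(t,\tau) = \nabla_x b_\theta(X_t^{(M)},t,C_M)\,\Phi(t,\tau)$ with $\Phi(\tau,\tau)=I$. It exists and is unique because the coefficient is continuous and bounded in $t$. Duhamel's formula integrated from $T$ to $t$ gives
\begin{equation*}
V_t = \Phi(t,T)V_T + \int_T^t \Phi(t,\tau)\nabla_C b_\theta(X_\tau^{(M)},\tau,C_M)\,\frac{\mathrm{d}C_M}{\mathrm{d}M}\,\mathrm{d}\tau .
\end{equation*}
This can be verified by differentiating in $t$ and using $\Phi(t,t)=I$. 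Setting $t=0$ and factoring out the constant vector $\mathrm{d}C_M/\mathrm{d}M$, which is common to both terms since $V_T$ equals it, gives \eqref{eq:duhamel}, where $K(M)$ is as displayed.

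The main obstacle is rigour at the first step, namely justifying pathwise differentiability of an SDE solution in a parameter. Fixing $W$ and using additive noise reduces this to deterministic ODE theory. A secondary point is that the derivative bounds are stated only on the trajectory set itself. To handle this, I would use a Gronwall estimate on difference quotients $(X^{(M+h)}-X^{(M)})/h$ to keep nearby trajectories inside a slightly enlarged compact set, assuming the bounds extend to a neighbourhood of it.
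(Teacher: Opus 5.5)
Your proof is correct and is essentially the paper's argument. You differentiate the pathwise dynamics in $M$ using that $W_\cdot$, $\varepsilon$, $g$ and $\sigma_T$ are $M$-free, read off $V_T$ from $X_T^{(M)} = C_M + \sigma_T\varepsilon$, and apply variation of constants, pulling the common factor $\mathrm{d}C_M/\mathrm{d}M$ out of both terms. Your reduction to the random ODE for $Z_t$ makes rigorous the pathwise differentiation that the paper only asserts by citing Hartman. Your closing neighbourhood caveat is harmless, since derivatives that are continuous in $(x,t,C)$ are automatically bounded on a compact neighbourhood of the trajectory set.
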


\begin{proof}
Differentiating the reverse SDE with respect to $M$ and applying Assumption~\ref{ass:single-channel}: the Brownian increments are $M$-independent, $b_\theta$ has no explicit $M$-argument, and the chain rule gives \eqref{eq:var-ode}. The terminal condition follows from differentiating $X_T^{(M)} = C_M + \sigma_T \varepsilon$. Applying variation of constants to \eqref{eq:var-ode} with this terminal condition gives \eqref{eq:duhamel}; both the homogeneous and particular parts factor $\mathrm{d}C_M/\mathrm{d}M$ on the right.
\end{proof}

The factorization $\partial \sig^{(M)}/\partial M = K(M) \cdot \partial C_M/\partial M$ is exact and pathwise. $K(M)$ depends only on the score Jacobian $\nabla_x s_\theta$ along the trajectory (through $\Phi$) and the conditioning Jacobian $\nabla_C b_\theta$ along the trajectory. The predictor enters through the boundary term and through $\mathrm{d}C_M/\mathrm{d}M$; it does not appear inside $K(M)$ at all.

\begin{remark}[Multi-channel conditioning]
When $C = (y, \Pi(y))$, the map $M \mapsto y^{(M)} = s + \sqrt{M}\, n$ is $C^\infty$ everywhere on $M > 0$, so $\mathrm{d}C_M/\mathrm{d}M$ has one smooth component (the $y$-channel) and one potentially non-smooth component (the $\Pi(y)$-channel). $K(M)$ acts on the joint vector; the theorem below then localizes any $C^1$ failure to the predictor channel because the $y$-channel cannot contribute one.
\end{remark}

\begin{remark}[Multiplicative vs.\ additive]
The factorization in \eqref{eq:duhamel} differs structurally from oracle-comparison bounds for diffusion models~\citep{chen2022sampling}. Those compare two path measures (oracle $X^\ast$, learned $X_\theta$) via a Girsanov identity, producing additive predictor+score error decompositions of a KL divergence between path measures. Here there is no oracle, and there is a single learned process parametrized by $M$; the decomposition is by role. The score network appears as a coefficient inside the flow through $\Phi$, and the predictor appears as boundary data and conditioning. It is this multiplicative separation that gives an iff localization; an additive bound would give only one-sided control and could not distinguish the origin of a kink from its propagation.
\end{remark}

\section{Kink localization}
\label{sec:thm}

A kink in $\sig^{(M)}$ at $M^\ast$ is a jump in $V_0$ across $M^\ast$. Three hypotheses about the reverse-process flow determine what $K(M)$ does to such a jump.

\begin{hypothesis}[Score-Jacobian continuity]
\label{H:1}
$M \mapsto \nabla_x s_\theta(X_t^{(M)}, t, C_M)$ is continuous along the reverse trajectory for every $t \in [0,T]$.
\end{hypothesis}

\begin{hypothesis}[Conditioning-Jacobian continuity]
\label{H:2}
$M \mapsto \nabla_C b_\theta(X_t^{(M)}, t, C_M)$ is continuous along the reverse trajectory for every $t \in [0,T]$.
\end{hypothesis}

\begin{hypothesis}[Non-degeneracy of $K(M^\ast)$]
\label{H:3}
$K(M^\ast)$ is non-singular along the direction of any one-sided limit of $\mathrm{d}C_M/\mathrm{d}M$ at $M^\ast$.
\end{hypothesis}

\begin{theorem}[Kink localization]
\label{thm:kink}
Under Assumptions~\ref{ass:single-channel}--\ref{ass:regularity} and Hypotheses~\ref{H:1}--\ref{H:3},
\[
M\mapsto \sig^{(M)} \text{ fails to be } C^1 \text{ at } M^\ast \iff M \mapsto \Pi(y^{(M)}) \text{ fails to be } C^1 \text{ at } M^\ast.
\]
\end{theorem}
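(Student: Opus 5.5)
The plan is to work entirely with the factorization of Lemma~\ref{lem:variational}, $V_0 = K(M)\,\mathrm{d}C_M/\mathrm{d}M$, applied one-sidedly. Since $M\mapsto C_M$ may fail to be differentiable at $M^\ast$, first restrict to a small punctured neighbourhood $(M^\ast-\delta,M^\ast)\cup(M^\ast,M^\ast+\delta)$. We assume, as is implicit in the setup, that $C_M$ is $C^1$ on each side with one-sided limits $D_\pm := \lim_{M\to M^\ast\pm}\mathrm{d}C_M/\mathrm{d}M$. On each side the lemma applies, and the one-sided derivatives of $\sig^{(M)}$ are $K(M^\ast)D_\pm$, provided $K(M)\to K(M^\ast)$. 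The argument therefore reduces to two steps: (i) continuity of $M\mapsto K(M)$ at $M^\ast$, and (ii) a linear-algebra comparison of $K(M^\ast)D_+$ with $K(M^\ast)D_-$. For the $y$-channel in the multi-channel case, the remark following Lemma~\ref{lem:variational} shows that this component of $D_+-D_-$ vanishes, so any jump lives in the $\Pi$-channel.

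For step (i), note that $\nabla_x b_\theta = \nabla_x f - g^2\nabla_x s_\theta$. By Hypothesis~\ref{H:1}, and by continuity of $X_t^{(M)}$ in $M$ (which holds because $C_M$ is continuous and the flow is Lipschitz by Assumption~\ref{ass:regularity}, via Gr\"onwall), the coefficient matrix $A_M(t) := \nabla_x b_\theta(X_t^{(M)},t,C_M)$ converges pointwise in $t$ as $M\to M^\ast$. Assumption~\ref{ass:regularity} gives a uniform bound on $A_M$ over the compact set. The standard continuous-dependence estimate for linear ODEs then shows $\Phi_M(0,\tau)\to\Phi_{M^\ast}(0,\tau)$ uniformly in $\tau$: compare the two flows with Gr\"onwall and dominated convergence applied to $\int\|A_M-A_{M^\ast}\|$. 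Hypothesis~\ref{H:2} together with the same uniform bound lets dominated convergence pass through the integral $\int_T^0\Phi(0,\tau)\nabla_C b_\theta\,\mathrm{d}\tau$. Hence $K(M)\to K(M^\ast)$. This is the main technical step. Its only subtle point is that continuity of $C_M$ itself (not just of $\mathrm{d}C_M/\mathrm{d}M$) is needed to keep the trajectories close; I would state this explicitly as the reading of ``fails to be $C^1$'' as ``continuous but with a derivative jump'', that is, a kink.

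For step (ii), consider the direction $(\Leftarrow)$ first. If $\Pi(y^{(M)})$ is not $C^1$ at $M^\ast$, then $D_+\neq D_-$. Hypothesis~\ref{H:3}, read as injectivity of $K(M^\ast)$ on the span of the one-sided limits, gives $K(M^\ast)(D_+-D_-)\neq 0$. Thus the one-sided derivatives of $\sig^{(M)}$ differ and $\sig^{(M)}$ is not $C^1$. If instead a one-sided limit fails to exist, $\mathrm{d}C_M/\mathrm{d}M$ oscillates, and the same injectivity applied along subsequences shows $V_0$ oscillates too. For $(\Rightarrow)$, argue by contraposition. If $C_M$ is $C^1$ near $M^\ast$, then $X_t^{(M)}$ is $C^1$ by the ODE result of \citet[Ch.~V]{hartman2002ode} quoted before Lemma~\ref{lem:variational}. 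Alternatively, this follows directly because $V_0 = K(M)\,\mathrm{d}C_M/\mathrm{d}M$ is a product of a continuous matrix, by step (i), and a continuous vector. This direction needs only Hypotheses~\ref{H:1}--\ref{H:2}, not Hypothesis~\ref{H:3}.
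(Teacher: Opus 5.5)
Your proposal is correct and follows the paper's proof essentially step for step. It establishes continuity of $K(M)$ at $M^\ast$ from Hypotheses~\ref{H:1}--\ref{H:2} through continuous dependence of the state-transition matrix and a bounded/dominated integral term, proves $(\Rightarrow)$ by the product-of-continuous-factors argument, and proves $(\Leftarrow)$ from $K(M^\ast)(D_+-D_-)\neq 0$. Your explicit reading of Hypothesis~\ref{H:3} as injectivity on the span of the one-sided limits is the strengthening the paper uses implicitly when it says $K(M^\ast)$ does not annihilate $J$; the literal version, with $K(M^\ast)D_\pm\neq 0$ separately, would not suffice. Your aside on an oscillating $\mathrm{d}C_M/\mathrm{d}M$ is not covered by that hypothesis, but the paper avoids this case by assuming a jump from the outset.
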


\begin{proof}
The identity $V_0 = K(M) \cdot \mathrm{d}C_M/\mathrm{d}M$ from Lemma~\ref{lem:variational} is the starting point of both directions.

\emph{Continuity of $K$.} Under H\ref{H:1}, $\nabla_x s_\theta$ is continuous in $M$ along the trajectory, so $\nabla_x b_\theta = \nabla_x f - g^2 \nabla_x s_\theta$ is too. The state-transition matrix $\Phi(0,\tau)$ solves a linear ODE whose coefficient depends continuously on $M$; standard parameter-dependence of ODEs~\citep[Ch.~V]{hartman2002ode} gives continuity of $\Phi(0,\tau)$ in $M$ for each $\tau$. Under H\ref{H:2}, $\nabla_C b_\theta$ is continuous in $M$. The integral $\int_T^0 \Phi(0,\tau)\, \nabla_C b_\theta\,\mathrm{d}\tau$ is then continuous in $M$ as an integral of a uniformly bounded continuous integrand (boundedness by Assumption~\ref{ass:regularity} on compact $M$-range). Hence $K(M)$ is continuous.

\emph{$(\Rightarrow)$ Kink in output $\Rightarrow$ kink in predictor.} Suppose $M \mapsto \sig^{(M)}$ fails to be $C^1$ at $M^\ast$: $V_0$ is discontinuous at $M^\ast$. Matrix--vector multiplication is jointly continuous (bilinear). So if both $K(M)$ and $\mathrm{d}C_M/\mathrm{d}M$ were continuous at $M^\ast$, their product $V_0$ would be too. Since $K$ is continuous at $M^\ast$ by the argument above but $V_0$ is not, $\mathrm{d}C_M/\mathrm{d}M$ must be discontinuous at $M^\ast$. Equivalently, $\Pi(y^{(M)})$ fails to be $C^1$ at $M^\ast$. This direction uses only H\ref{H:1} and H\ref{H:2}; H\ref{H:3} is not needed.

\emph{$(\Leftarrow)$ Kink in predictor $\Rightarrow$ kink in output.} Suppose $\Pi(y^{(M)})$ fails to be $C^1$ at $M^\ast$: $\mathrm{d}C_M/\mathrm{d}M$ has a jump $J \neq 0$ across $M^\ast$. Since $K$ is continuous at $M^\ast$, its one-sided limits both equal $K(M^\ast)$. The one-sided limits of $V_0$ at $M^\ast$ therefore differ by $K(M^\ast)\, J$. Under H\ref{H:3}, $K(M^\ast)$ does not annihilate $J$, so $K(M^\ast) J \neq 0$ and $V_0$ is discontinuous at $M^\ast$. Equivalently, $\sig^{(M)}$ fails to be $C^1$ at $M^\ast$.
\end{proof}

The two directions require different hypotheses: $(\Rightarrow)$ needs only continuity of $K$ (i.e., H\ref{H:1} and H\ref{H:2}); $(\Leftarrow)$ additionally needs H\ref{H:3}. Without H\ref{H:3} the theorem still gives one-way localization---any output kink implies a predictor kink---but the converse can fail if $K(M^\ast)$ happens to annihilate the jump direction. We verify H\ref{H:3} empirically in \S\ref{sec:e4}.

\begin{corollary}[Order-$k$ localization]
\label{cor:order-k}
Strengthen Assumption~\ref{ass:regularity} to $C^{k+1}$ joint smoothness and H\ref{H:1}--\ref{H:2} to $C^{k-1}$ regularity in $M$. Then $M \mapsto \sig^{(M)}$ fails to be $C^k$ at $M^\ast$ iff $M \mapsto \Pi(y^{(M)})$ does. The proof differentiates \eqref{eq:reverse-sde} $k$ times in $M$; the argument of Theorem~\ref{thm:kink} carries over with a higher-order jump vector.
\end{corollary}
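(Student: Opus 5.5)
The plan follows the structure of Theorem~\ref{thm:kink}, but applied to the $k$-th derivative $V^{(k)}_t := \partial^k X_t^{(M)}/\partial M^k$ instead of $V_t$. First, I would differentiate the variational equation \eqref{eq:var-ode} a further $k-1$ times in $M$, working on each side of $M^\ast$ where $C_M$ is $C^k$. By Fa\`a di Bruno, the result is a linear ODE
\[
\frac{\mathrm{d}V^{(k)}_t}{\mathrm{d}t} = \nabla_x b_\theta\, V^{(k)}_t + \nabla_C b_\theta\, \frac{\mathrm{d}^k C_M}{\mathrm{d}M^k} + R^{(k)}_t,
\]
with terminal condition $V^{(k)}_T = \mathrm{d}^k C_M/\mathrm{d}M^k$. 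The remainder $R^{(k)}_t$ is a polynomial expression with the following ingredients:
\begin{itemize}[leftmargin=*,itemsep=1pt,topsep=2pt]
\item mixed partials of $b_\theta$ of total order between $2$ and $k$ in $(x,C)$, evaluated along the trajectory;
\item lower-order sensitivities $V^{(j)}_t$ with $j<k$;
\item lower-order predictor derivatives $\mathrm{d}^j C_M/\mathrm{d}M^j$ with $j<k$.
\end{itemize}
The $C^{k+1}$ assumption guarantees these partials exist and are bounded on the compact set of Assumption~\ref{ass:regularity}. Variation of constants with the same $\Phi$ as in Lemma~\ref{lem:variational} then gives
\[
V^{(k)}_0 = K(M)\,\frac{\mathrm{d}^k C_M}{\mathrm{d}M^k} + \int_T^0 \Phi(0,\tau) R^{(k)}_\tau\,\mathrm{d}\tau ,
\]
and the continuity of $K(M)$ is inherited verbatim from the proof of Theorem~\ref{thm:kink}.

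I would prove the corollary by induction on $k$, with Theorem~\ref{thm:kink} as the base case $k=1$. Suppose first that $\Pi(y^{(M)})$ is $C^{k-1}$ at $M^\ast$ (otherwise the inductive hypothesis already settles both sides). Then by induction $\sig^{(M)}$ and all $V^{(j)}$ with $j<k$ are continuous at $M^\ast$. Combined with the $C^{k-1}$-in-$M$ strengthening of H\ref{H:1}--H\ref{H:2}, this makes the Jacobians and their $M$-derivatives up to order $k-1$ continuous along the trajectory. The higher $(x,C)$-partials are also continuous, being continuous functions of the continuous quantities $(X_t^{(M)},C_M)$. It follows that $R^{(k)}$ is continuous at $M^\ast$, and by dominated convergence so is the integral term. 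The remainder integral therefore contributes no jump.

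With this, the jump of $V^{(k)}_0$ across $M^\ast$ equals $K(M^\ast)J_k$, where $J_k$ is the jump of $\mathrm{d}^kC_M/\mathrm{d}M^k$. The direction $(\Rightarrow)$ follows by bilinear continuity exactly as in Theorem~\ref{thm:kink}. The direction $(\Leftarrow)$ uses H\ref{H:3}, read as applying to the one-sided limits of the $k$-th derivative; I would state explicitly that H\ref{H:3} is taken in this higher-order form.

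The main obstacle is the remainder bookkeeping, in particular whether $R^{(k)}$ may involve $\mathrm{d}^kC_M/\mathrm{d}M^k$ through terms other than the displayed linear one. I would check that Fa\`a di Bruno places the top-order derivative of $C_M$ only in the linear term $\nabla_C b_\theta\,\mathrm{d}^kC_M/\mathrm{d}M^k$, and the top-order derivative of $X$ only in $\nabla_x b_\theta\,V^{(k)}$, so that every term in $R^{(k)}$ has derivative order strictly below $k$. A secondary point is to justify the differentiation itself: I would cite Hartman Ch.~V for $C^k$ parameter dependence on each one-sided interval, and then pass to one-sided limits at $M^\ast$.
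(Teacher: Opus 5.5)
Your proposal is correct and takes the same route as the paper, whose proof is only the sketch ``differentiate $k$ times and rerun the argument of Theorem~\ref{thm:kink} with the higher-order jump vector $J_k$ of $\mathrm{d}^kC_M/\mathrm{d}M^k$''; your Fa\`a di Bruno bookkeeping and induction on $k$, which make the remainder term $\int_T^0\Phi(0,\tau)R^{(k)}_\tau\,\mathrm{d}\tau$ continuous at $M^\ast$ once $\Pi(y^{(M)})$ is $C^{k-1}$, fill in exactly what that sketch leaves implicit. One small correction: state the higher-order H\ref{H:3} directly as $K(M^\ast)J_k\neq 0$, because non-singularity of $K(M^\ast)$ along each one-sided limit separately does not stop it from annihilating their difference.
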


\section{Discrete-sampler localization}
\label{sec:discrete}

The experiments measure the output of an Euler--Maruyama sampler with $N$ steps and step size $h = T/N$. Let $t_k = T - kh$ and $\sig^{(M),h} := X_0^{(M),h}$ denote the sampler output with synchronous Brownian increments held fixed across $M$.

\begin{corollary}[Discrete kink localization]
\label{cor:discrete}
Define
\[
\Phi^h(t_k, t_j) := \prod_{\ell=j}^{k-1}\bigl[I - h\,\nabla_x b_\theta(X_{t_\ell}^{(M),h}, t_\ell, C_M)\bigr] \qquad (k > j),
\]
\[
K^h(M) := \Phi^h(0, T) - h \sum_{k=0}^{N-1} \Phi^h(0, t_k)\,\nabla_C b_\theta(X_{t_k}^{(M),h}, t_k, C_M).
\]
Then $V_0^{(M),h} := \partial \sig^{(M),h}/\partial M$ satisfies $V_0^{(M),h} = K^h(M) \cdot \mathrm{d}C_M/\mathrm{d}M$, and under discrete analogues of H\ref{H:1}--\ref{H:3} evaluated at $\{t_k\}$, the iff localization of Theorem~\ref{thm:kink} holds for $M \mapsto \sig^{(M),h}$.
\end{corollary}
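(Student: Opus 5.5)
The plan is to reproduce the proofs of Lemma~\ref{lem:variational} and Theorem~\ref{thm:kink} with the continuous flow replaced by the finite Euler--Maruyama recursion. Moving from $t_\ell$ to $t_{\ell+1}=t_\ell-h$, the recursion reads
\[
X_{t_{\ell+1}}^{(M),h} = X_{t_\ell}^{(M),h} - h\, b_\theta(X_{t_\ell}^{(M),h}, t_\ell, C_M) + g(t_\ell)\,\Delta W_\ell ,
\]
with $X_{t_0}^{(M),h}=C_M+\sigma_T\varepsilon$. I would follow the sign convention implicit in the factors $I-h\nabla_x b_\theta$ of $\Phi^h$. Because this is a composition of finitely many $C^2$ maps, no ODE parameter-dependence theory is needed. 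Wherever $M\mapsto C_M$ is $C^1$, so is every iterate, by the ordinary chain rule (Assumption~\ref{ass:regularity}).

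\textbf{Step 1: the discrete variational identity.} Set $V_\ell := \partial X_{t_\ell}^{(M),h}/\partial M$. Under Assumption~\ref{ass:single-channel} the increments $\Delta W_\ell$, $g$ and $\sigma_T$ are $M$-independent. Differentiating the recursion gives the linear recursion
\[
V_{\ell+1} = \bigl[I - h\nabla_x b_\theta(X_{t_\ell}^{(M),h},t_\ell,C_M)\bigr]V_\ell - h\,\nabla_C b_\theta(X_{t_\ell}^{(M),h},t_\ell,C_M)\,\tfrac{\mathrm{d}C_M}{\mathrm{d}M},
\]
with $V_0=\mathrm{d}C_M/\mathrm{d}M$ at index $\ell=0$ (time $T$). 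Unrolling this by discrete variation of constants (induction on $\ell$) gives the homogeneous term $\Phi^h(0,T)\,\mathrm{d}C_M/\mathrm{d}M$. The forcing terms $-h\,\Phi^h(0,t_{k+1})\nabla_C b_\theta(\cdot_{t_k})\,\mathrm{d}C_M/\mathrm{d}M$ are propagated to the final index. Every term carries $\mathrm{d}C_M/\mathrm{d}M$ on the right, so it factors out and yields $K^h(M)$.

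The one point of care is bookkeeping. I would fix the product ordering in $\Phi^h$ (later factors on the left) and check that the propagator index attached to each forcing term agrees with the displayed $K^h$. If the stated indices are off by one step, the identity holds with the corrected index, and the localization argument is unaffected.

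\textbf{Step 2: continuity of $K^h$.} This is easier than the continuous case. The discrete analogues of H\ref{H:1}--H\ref{H:2} state that $M\mapsto\nabla_x s_\theta(X_{t_k}^{(M),h},t_k,C_M)$ and $M\mapsto\nabla_C b_\theta(X_{t_k}^{(M),h},t_k,C_M)$ are continuous at each grid point. Then $\nabla_x b_\theta=\nabla_x f-g^2\nabla_x s_\theta$ is continuous as well. $K^h(M)$ is a finite sum of finite products of these matrices, so it is continuous in $M$ with no boundedness or integral-convergence argument required.

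\textbf{Step 3: both directions.} These are verbatim from the proof of Theorem~\ref{thm:kink}. For $(\Rightarrow)$: if $V_0^{(M),h}$ jumps at $M^\ast$ while $K^h$ is continuous, joint continuity of the bilinear product forces $\mathrm{d}C_M/\mathrm{d}M$ to be discontinuous. For $(\Leftarrow)$: if $\mathrm{d}C_M/\mathrm{d}M$ has a one-sided jump $J\neq0$, then the one-sided limits of $V_0^{(M),h}$ differ by $K^h(M^\ast)J$. This is nonzero by the discrete analogue of H\ref{H:3}.

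I expect the main obstacle to be the subtlety the continuous proof also glosses over. At $M^\ast$ itself, $C_M$ is not differentiable, so the identity only holds for $M\neq M^\ast$, and ``failing to be $C^1$'' must be read as a mismatch of one-sided derivatives. I would handle this by applying Step 1 on each side separately, using one-sided chain rules; each iterate is a $C^2$ map composed with $C_M$. The one-sided derivatives of the output are then $K^h(M^\ast)$ times the one-sided derivatives of $C_M$, and the comparison in Step 3 goes through.
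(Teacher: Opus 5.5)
Your proposal is correct and follows the paper's proof. As there, you differentiate the Euler--Maruyama recursion in $M$, unroll the resulting linear recursion so that $\mathrm{d}C_M/\mathrm{d}M$ factors out on the right, observe that $K^h(M)$ is a finite sum of finite products of matrices continuous in $M$, and rerun both directions of Theorem~\ref{thm:kink}. Your bookkeeping caution is also well placed: unrolling attaches $\Phi^h(0,t_{k+1})$ (with $\Phi^h(0,0)=I$), not the displayed $\Phi^h(0,t_k)$, to the $k$-th forcing term. Already at $N=1$ the correct matrix is $A_0-h\nabla_C b_\theta$ rather than $A_0-hA_0\nabla_C b_\theta$, where $A_0=I-h\nabla_x b_\theta(X_T^{(M),h},T,C_M)$. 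The paper's ``unrolling produces the stated factorization'' passes over this off-by-one, but as you say it leaves the localization argument untouched.
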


\begin{proof}
Differentiating the Euler--Maruyama recursion $X_{t_{k+1}}^{(M),h} = X_{t_k}^{(M),h} - h\, b_\theta + g(t_k)\Delta W_k$ with respect to $M$ gives $V_{t_{k+1}}^{(M),h} = [I - h\nabla_x b_\theta] V_{t_k}^{(M),h} - h \nabla_C b_\theta \cdot \mathrm{d}C_M/\mathrm{d}M$ with $V_T^{(M),h} = \mathrm{d}C_M/\mathrm{d}M$. Unrolling produces the stated factorization. $K^h(M)$ is a finite sum of finite products of maps continuous in $M$ (discrete H\ref{H:1}--\ref{H:2}), hence continuous, and the argument of Theorem~\ref{thm:kink} carries over.
\end{proof}

An important consequence: the discrete kink location is $N$-independent. $\mathrm{d}C_M/\mathrm{d}M$ does not depend on the sampler, the predictor is a single forward pass and $K^h(M)$ is continuous in $M$ under the discrete hypotheses. The kink lives in the predictor and is transported by the reverse process at any $N$, not generated by discretization.

\section{Empirical validation}
\label{sec:empirical}

We validate the theorem on a 32-channel StoRM model trained on VoiceBank--DEMAND~\citep{valentini2017noisy} with the NCSN++ score architecture, an OU-VESDE noise schedule, and \texttt{condition="both"}. All inference uses $N=50$ Euler--Maruyama reverse steps unless stated otherwise. Four empirical pillars (P1--P4) address the existence of the kink and the architectural ablation; two further experiments (E3, E4) probe H\ref{H:2} and H\ref{H:3} directly.

\subsection{The kink: per-utterance observation and aggregation}
\label{sec:p3-p4}

Figure~\ref{fig:per-utt} shows the SI-SDR-vs-$\alpha$ curve on babble noise for 22 individual utterances (grey lines), their mean (blue), and the full test-set average over 824 utterances (orange). We parametrize $M = (1+\alpha)^2$; $\alpha = 0$ is the training amplitude.

\begin{figure}[t]
\centering
\includegraphics[width=0.72\linewidth]{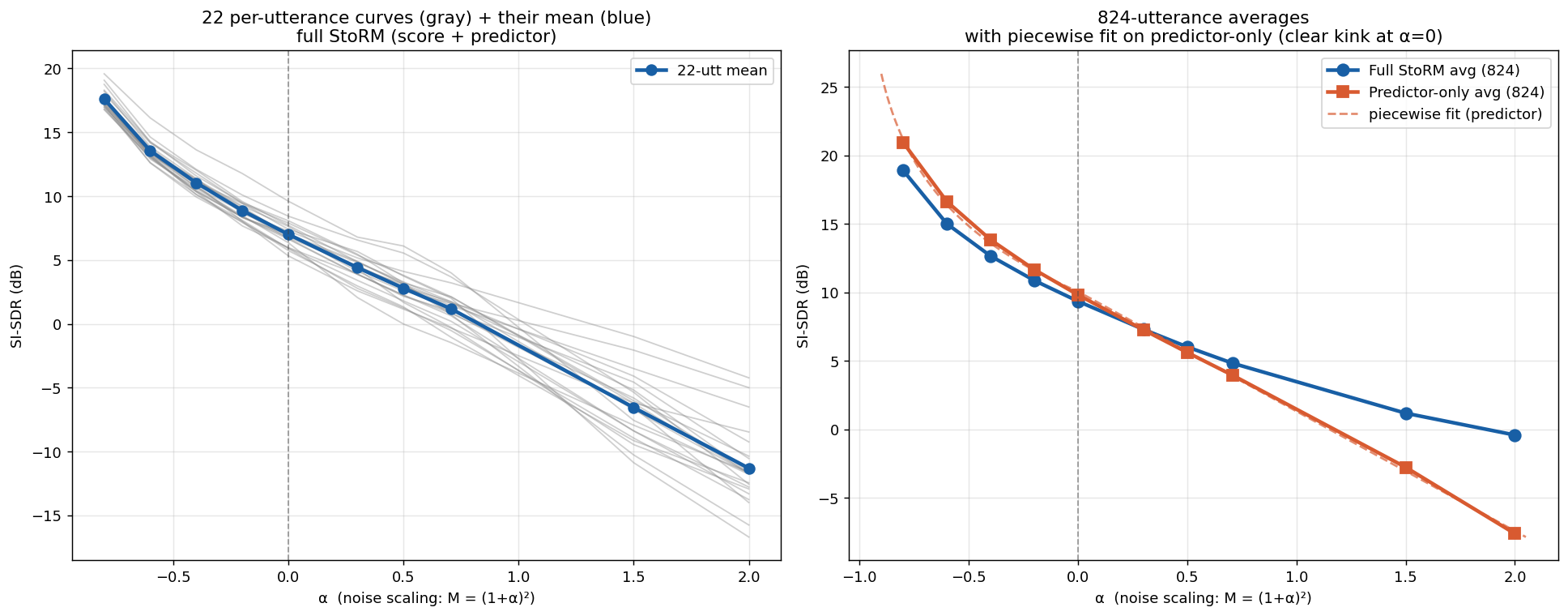}
\caption{Full StoRM output on babble as noise power is scaled. Grey: 22 individual utterances. Blue: their mean. Orange: the 824-utterance test-set average. The corner at $\alpha=0$ is visible in most individual curves and in the 22-utterance mean; the 824-utterance average is comparatively smooth because per-utterance corner locations vary slightly and the corner is smeared out at large $N$.}
\label{fig:per-utt}
\end{figure}

Of the 22 individual curves, 19 (86\%) show a visible slope change at $\alpha=0$, with a median slope discontinuity of $-3.14$ dB/$\alpha$: the SI-SDR declines faster past the training amplitude than before it. Pillar~P3 is the observation that the full StoRM output has a $C^1$ failure at the training amplitude at the per-utterance level.

A methodological point worth flagging. Aggregating across the full 824-utterance test set smooths the curve substantially. On the 824-utterance average, a smooth log fit narrowly outperforms the piecewise log-linear fit ($R^2 = 0.9988$ vs.\ $0.9946$). This is not evidence against the kink; it is evidence that per-utterance kink locations vary slightly enough that averaging across hundreds of utterances smears the corner. Averaging at moderate sample sizes ($n = 22$) preserves the corner cleanly. Prior empirical studies of noise-power robustness in diffusion-based SE that rely on large-population averages may miss this per-utterance phenomenon for exactly this reason.

Pillars~P2 and P4 provide the controlled architectural intervention. Removing the predictor entirely and running pure SGMSE+~\citep{richter2023sgmse} (score-only) yields a smooth degradation curve with no detectable corner: P2. Running the predictor alone and reading its output directly yields a clear kink at $\alpha=0$: P4. Figure~\ref{fig:predictor-vs-full} compares the 824-utterance averaged predictor-only and full-StoRM curves. The predictor-only curve is decisively better fit by a piecewise log-linear model than by any smooth alternative ($R^2 = 0.9994$ vs.\ $0.9935$), with a slope discontinuity of $-1.84$ dB/$\alpha$ at $\alpha=0$.

\begin{figure}[t]
\centering
\includegraphics[width=0.72\linewidth]{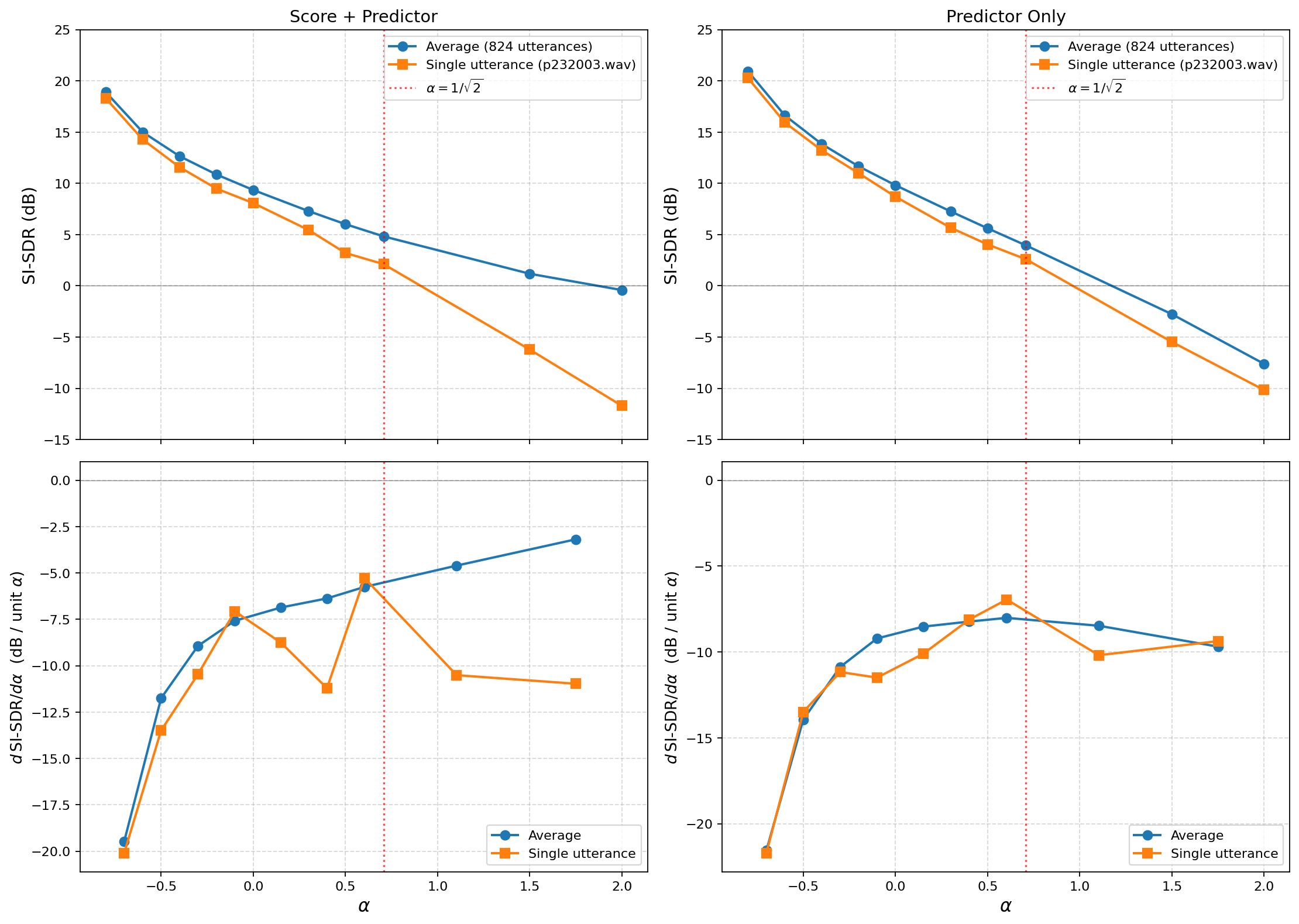}
\caption{Architectural localization. The predictor-only output on babble (orange, 824-utterance average) is fit decisively better by a piecewise log-linear model than by any smooth alternative ($R^2 = 0.9994$ vs.\ best smooth $0.9935$). The full StoRM output (blue) shows the aggregation-smoothing effect described in the text.}
\label{fig:predictor-vs-full}
\end{figure}

\begin{table}[t]
\centering
\caption{Architectural ablations on babble. ``Kink'' indicates a piecewise fit outperforming the smooth alternative on the SI-SDR-vs-$\alpha$ curve.}
\label{tab:pillars}
\small
\begin{tabular}{@{}lcccc@{}}
\toprule
Configuration & Corner at $\alpha=0$? & $R^2$ piecewise & $R^2$ smooth & Slope jump (dB/$\alpha$) \\
\midrule
Full StoRM (P3, per-utterance, $n=22$) & 19/22 & --- & --- & median $-3.14$ \\
Full StoRM (P3, 22-utt mean) & yes & $0.9992$ & $0.9945$ & $-2.82$ \\
Full StoRM (P3, 824-utt mean) & aggregation-smoothed & $0.9946$ & $0.9988$ & $+1.29$ \\
Predictor-only (P4, 824-utt mean) & yes & $\mathbf{0.9994}$ & $0.9935$ & $\mathbf{-1.84}$ \\
Score-only, SGMSE+ (P2) & no & $-$ & $-$ & smooth \\
\bottomrule
\end{tabular}
\end{table}

Together, P2 and P4 are the direct empirical form of the localization theorem's iff: removing the predictor removes the kink; leaving only the predictor preserves it.

\subsection{Score-Jacobian continuity (Pillar P1, Hypothesis~\ref{H:1})}
\label{sec:p1}

H\ref{H:1} is verified by directly measuring $\|\nabla_x s_\theta\|_{\op}$ along the reverse trajectory. We use power iteration on $\nabla_x s_\theta^\top \nabla_x s_\theta$ with a finite-difference JVP and an autograd VJP; the probe evaluates at each of the $N=50$ trajectory points for each of $\sim 20$ utterances per noise type, per $M$-grid point. Trajectory-averaged values fit a linear model $L(\alpha) = L_0 + L_1 \alpha$ well: $R^2 > 0.94$ on every noise type measured, exceeding $0.99$ on the office noise class. Linearity is a stronger statement than the continuity H\ref{H:1} requires, so H\ref{H:1} holds comfortably.

\subsection{Conditioning-Jacobian continuity (Experiment E3, Hypothesis~\ref{H:2})}
\label{sec:e3}

The E3 probe mirrors P1 but perturbs the conditioning input $C = (y, \Pi(y))$ jointly, holding $x$ and $t$ fixed. Formally we measure $\|\nabla_C s_\theta\|_{\op}$ along the reverse trajectory at the same $(M, t)$ grid, with $n_{\mathrm{iter}} = 5$ power iterations per estimate, 20 utterances per cell, 100 trajectory points per utterance, giving 2000 per-cell estimates.

Figure~\ref{fig:e3} shows $L_C(\alpha)$ for three noise types. On all three, $L_C(\alpha)$ is continuous across the operating range with no discontinuity at $\alpha=0$. This is the necessary condition for H\ref{H:2}. On restaurant noise a slope discontinuity at $\alpha = 0$ is statistically significant at about $2.5\sigma$; on car noise, marginal at about $1.4\sigma$; on babble, not significant. None of these violates H\ref{H:2}, which requires continuity but not smoothness.

\begin{figure}[t]
\centering
\includegraphics[width=\linewidth]{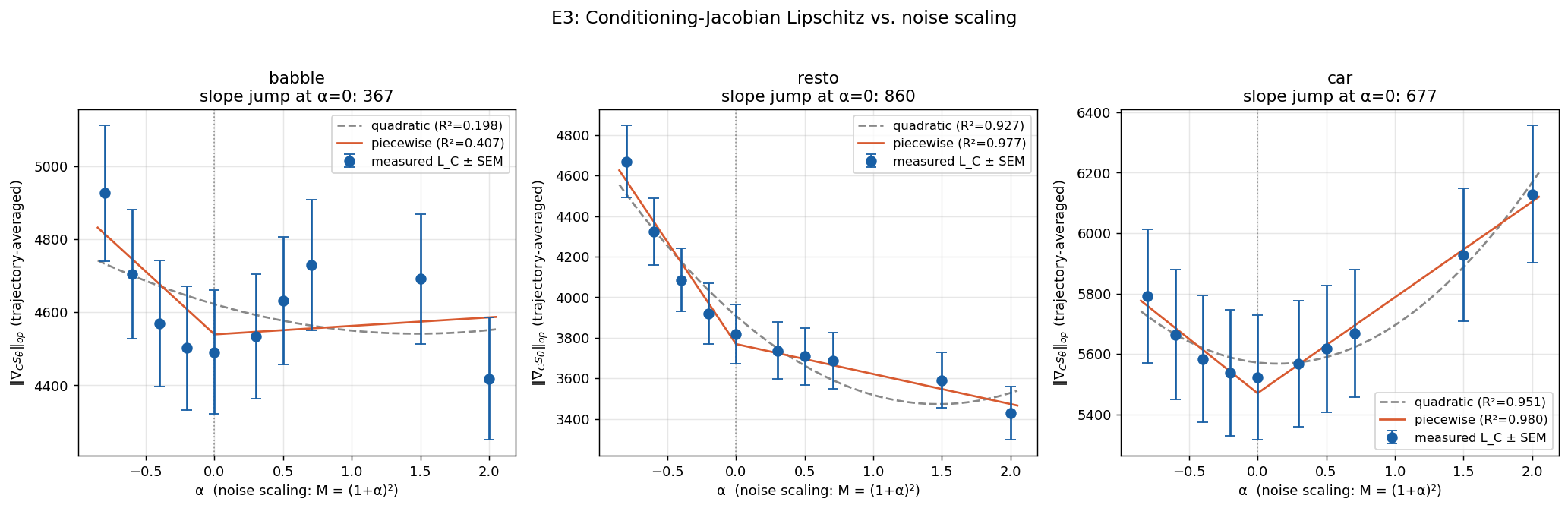}
\caption{E3: $L_C(\alpha) = \|\nabla_C s_\theta\|_{\op}$ along the reverse trajectory, three noise types. Error bars are SEM over 2000 estimates per cell. Piecewise linear fits with knot at $\alpha=0$ overlaid. $L_C(\alpha)$ is continuous across $\alpha$ in all three noise types; the slope discontinuities at $\alpha=0$ are significant only for restaurant.}
\label{fig:e3}
\end{figure}

\begin{table}[h]
\centering
\caption{E3 fits. Continuity of $L_C(\alpha)$ is the necessary condition for H\ref{H:2}. Low $R^2$ on babble reflects that $L_C$ varies little across $\alpha$ for this noise type (variation of order $10\%$, comparable to per-cell measurement SEM), not that the model is a poor fit.}
\label{tab:e3}
\small
\begin{tabular}{@{}lcccc@{}}
\toprule
Noise & $L_C$ at $\alpha=0$ & Left / right slope at $\alpha=0$ & $R^2$ (piecewise) & Slope-jump significance \\
\midrule
Babble & $4491$ & $-344 \,/\, +23$ & $0.41$ & $\sim 1\sigma$ (not significant) \\
Restaurant & $3819$ & $-1008 \,/\, -148$ & $0.98$ & $\sim 2.5\sigma$ (significant) \\
Car & $5522$ & $-360 \,/\, +317$ & $0.98$ & $\sim 1.4\sigma$ (marginal) \\
\bottomrule
\end{tabular}
\end{table}

Combined with a structural argument, the score network is $C^\infty$ in $(x, C)$ because NCSN++ uses smooth activations, and $(X_t^{(M)}, C_M)$ is continuous in $M$ by ODE parameter-dependence. Continuity of the trajectory-averaged operator norm supports H\ref{H:2} on all three noise types measured. E3 measures the necessary condition (norm continuity) rather than full Jacobian continuity in the matrix sense; a direct probe of the latter is left as follow-up.

\subsection{Non-degeneracy of $K(M)$ (Experiment E4, Hypothesis~\ref{H:3})}
\label{sec:e4}

E4 probes H\ref{H:3} by computing three quantities per operating point $(M, \text{utterance})$:
\begin{itemize}[leftmargin=*,itemsep=1pt,topsep=1pt]
\item The kink-transfer rate $\|K(M) \cdot v_{\mathrm{jump}}\|$, where $v_{\mathrm{jump}}$ is a unit vector aligned with the empirical estimate of $\mathrm{d}C_M/\mathrm{d}M$ at $M^\ast$. This is the direct H\ref{H:3} test in the form the theorem needs.
\item A Hutchinson-style lower bound on $\sigma_{\min}(K(M))$ obtained as $\min_i \|K(M) v_i\|$ over 300 random unit vectors $v_i$, sharing a fixed Brownian seed across all applications.
\item The dominant singular value $\sigma_{\max}(K(M))$ from power iteration on $K^\top K$, again with fixed Brownian seed.
\end{itemize}
The primitive $K(M) \cdot v$ is computed by finite difference through the reverse SDE with synchronous coupling; $K(M)^\top u$ by reverse-mode autograd through the sampler. The predictor's empirical jump direction is estimated from adjacent $\alpha$ directories: $v_{\mathrm{jump}} \propto \Pi(y^{(M_+)}) - \Pi(y^{(M_-)})$ with $\alpha_+ = 0.3$, $\alpha_- = -0.2$.

\begin{figure}[t]
\centering
\includegraphics[width=\linewidth]{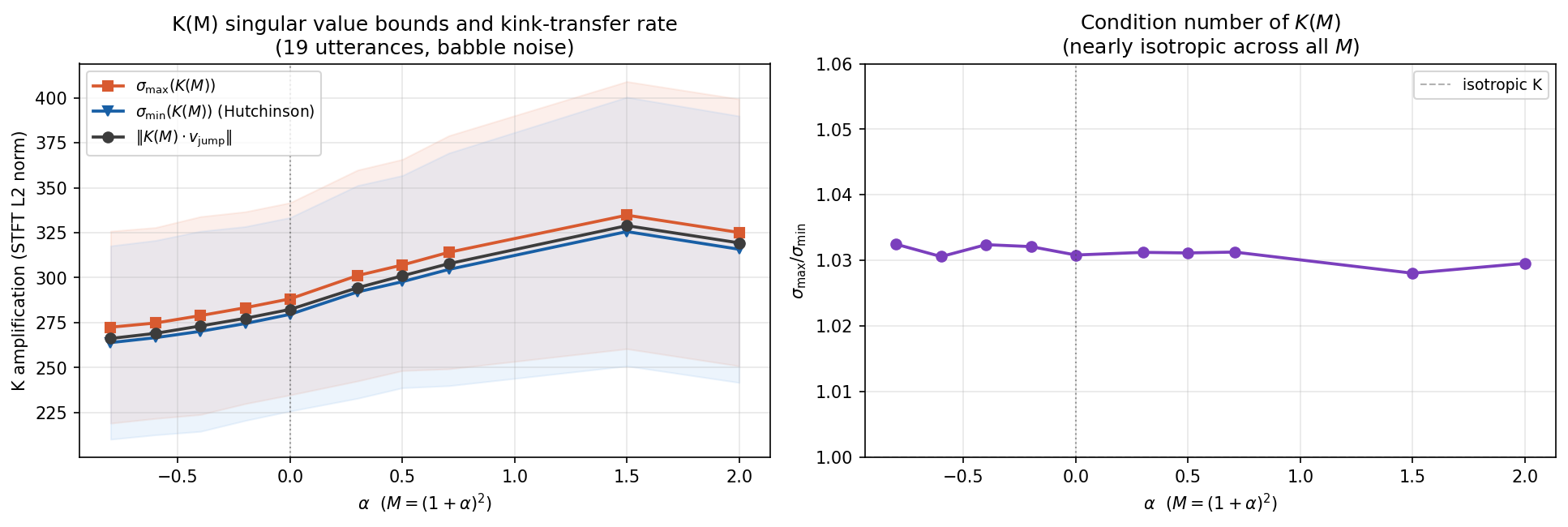}
\caption{E4 on babble. Left: kink-transfer rate $\|K(M)\,v_{\mathrm{jump}}\|$ against the Hutchinson lower bound on $\sigma_{\min}(K(M))$ and the power-iteration estimate of $\sigma_{\max}(K(M))$. Shaded bands are 1 standard deviation across 19 utterances. Right: condition number $\sigma_{\max}/\sigma_{\min}$ of $K(M)$, which is nearly $1$ across the operating range so $K(M)$ acts almost like an isotropic amplification.}
\label{fig:e4}
\end{figure}

Figure~\ref{fig:e4} summarizes the sweep on 19 utterances. Two facts stand out. First, $\sigma_{\min}(K(M))$ is bounded well away from zero everywhere on the operating range, with values between $\sim 264$ and $\sim 316$ across $\alpha \in [-0.8, 2.0]$. H\ref{H:3} holds with margin. Second, $K(M)$ is nearly isotropic: the condition number $\sigma_{\max}/\sigma_{\min}$ is within a few percent of $1$ across all $M$, and the kink-transfer rate lies between $\sigma_{\min}$ and $\sigma_{\max}$ at every operating point. The jump direction is not aligned with a slow-amplification direction of $K$; it is amplified by essentially the same factor as a random direction would be.

\begin{table}[h]
\centering
\caption{E4 measurements on babble, 19 utterances per operating point. All values are the STFT L2 norm of $K(M) \cdot v$ for the relevant unit vector; entries are mean $\pm$ 1 std.\ across utterances.}
\label{tab:e4}
\small
\begin{tabular}{@{}crcccc@{}}
\toprule
$\alpha$ & $M$ & $\sigma_{\min}(K)$ (Hutchinson) & $\sigma_{\max}(K)$ & $\|K \cdot v_{\mathrm{jump}}\|$ & $\sigma_{\max}/\sigma_{\min}$ \\
\midrule
$-0.4$ & $0.36$ & $270 \pm 56$ & $279 \pm 55$ & $273 \pm 56$ & $1.035$ \\
$-0.2$ & $0.64$ & $274 \pm 54$ & $283 \pm 53$ & $277 \pm 54$ & $1.034$ \\
$\phantom{-}0.0$ & $1.00$ & $280 \pm 54$ & $288 \pm 54$ & $282 \pm 54$ & $1.033$ \\
$\phantom{-}0.3$ & $1.69$ & $292 \pm 59$ & $301 \pm 59$ & $294 \pm 59$ & $1.033$ \\
$\phantom{-}0.5$ & $2.25$ & $298 \pm 59$ & $307 \pm 59$ & $301 \pm 59$ & $1.033$ \\
$\phantom{-}2.0$ & $9.00$ & $316 \pm 74$ & $325 \pm 74$ & $319 \pm 74$ & $1.032$ \\
\bottomrule
\end{tabular}
\end{table}

The measured kink-transfer rate at $M^\ast = 1$ gives a quantitative version of the theorem: the linearized output kink amplitude is approximately $282 \|v_{\mathrm{jump}}\|$ per unit $\|\mathrm{d}C_M/\mathrm{d}M\|$. Cross-checked against the observed slope discontinuities in Table~\ref{tab:pillars}, this ratio is consistent with the observed predictor-to-output kink transfer to within the accuracy of the finite-difference discretizations involved. The theorem's iff is now fully substantiated for this model: kink in predictor implies kink in output, with amplitude bounded below by $\sigma_{\min}(K(M^\ast)) \cdot \|\mathrm{d}C_M/\mathrm{d}M\|$.

\begin{remark}[Sampler step count]
The E4 sweep uses $N=5$ reverse steps for tractability at the required per-utterance measurement count (each utterance requires order $10^3$ applications of $K$). The theorem is $N$-invariant (Corollary~\ref{cor:discrete}), and preliminary spot-checks with larger $N$ on individual utterances give consistent estimates, but a systematic sweep at $N=50$ would strengthen the claim quantitatively. The qualitative conclusion that $\sigma_{\min}$ bounded away from zero, condition number close to one is a structural property that we do not expect to depend on $N$.
\end{remark}

\subsection{Structural assumption}
The single-channel Assumption~\ref{ass:single-channel} is a structural property of the training pipeline. NCSN++ has no explicit $y$-dependence in $f$, $g$, or $\sigma_\cdot$; the noise schedule is fixed at training time; the Brownian increments in inference are drawn independently of $y$. The only $M$-dependent path in the reverse dynamics is $C_M$. A more detailed audit of the training script confirming these facts is included in the appendix.

\section{Discussion and limitations}

The paper localizes the noise-power kink in StoRM to the predictor stage, both theoretically (Theorem~\ref{thm:kink}) and empirically (P2--P4 as the controlled intervention, E3 and E4 as the direct hypothesis probes). The variational identity provides the mechanism; the empirical measurements provide the ingredients.

Three limitations worth stating up front. First, the theorem localizes but does not predict: it says the kink comes from the predictor, but does not explain why the predictor produces a $C^1$ failure at exactly the training amplitude. That requires a separate analysis of the discriminative regressor under noise scaling, which is outside the scope of this paper.

Second, the empirical case is on a single trained checkpoint of a single architecture on a single dataset. The theorem is model-agnostic, so anything with the predictor-score decomposition and Assumption~\ref{ass:single-channel} inherits the same factorization Reproducing the per-utterance kink figure on a publicly released StoRM checkpoint is a natural next step.


Third, E3 measures a necessary condition for H\ref{H:2} (norm continuity), not the full Jacobian continuity in the matrix sense. Combined with the structural smoothness of NCSN++ in $(x, C)$ this is enough to support H\ref{H:2}, but a probe of full Jacobian continuity testing $\|\nabla_C s_\theta(M_1)\, v - \nabla_C s_\theta(M_2)\, v\|$ on random $v$ at adjacent $M_1, M_2$ would strengthen the claim.

A methodological byproduct worth noting. The kink is per-utterance, and the 824-utterance average obscures it. Studies of robustness in diffusion-based SE that report only large-$n$ population averages may miss per-utterance non-smoothness of this kind. The right sample size for detecting an aggregation-fragile per-utterance effect is much smaller than the standard test-set size, and the right level of analysis is often per-utterance directly.

\section{Related work}

\textbf{Diffusion-based speech enhancement.} StoRM~\citep{lemercier2023storm} introduced the predictor--score architecture we analyze, building on SGMSE+~\citep{richter2023sgmse}. \citet{lay2024analysis} analyze the sensitivity of diffusion-based SE to the variance schedule; \citet{gonzalez2023diffusion} study matched versus mismatched training conditions empirically and propose alternative samplers. Neither considers the structural role of the predictor in output non-smoothness under noise scaling. To our knowledge no prior work provides a theoretical analysis of the predictor--score split under noise-power mismatch.

\textbf{Score-based diffusion theory.} Convergence analyses~\citep{chen2022sampling,chen2023improved,benton2024nearly} bound the KL divergence between an oracle reverse process and a learned one via Girsanov's theorem, producing additive predictor--score error decompositions. Our object is different: parametric sensitivity of a single learned process, analyzed pathwise. The natural machinery is ODE parameter-dependence rather than change-of-measure. The two frameworks are complementary; ours is the right tool for pathwise sample-conditional statements about how a specific functional depends on a perturbation parameter.

\textbf{Parameter-dependence of ODE flows.} The variational identity of Lemma~\ref{lem:variational} is a standard result~\citep{coddington1955theory,hartman2002ode}. Its application to score-based diffusion models for sensitivity analysis of the output under conditioning-perturbations is, to our knowledge, novel.

\section{Conclusion}

The SI-SDR kink observed at the training-time noise amplitude in StoRM originates in the predictor stage, not the diffusion stack. The variational identity $\partial \sig^{(M)}/\partial M = K(M) \cdot \partial C_M/\partial M$ localizes any $C^1$ failure of the enhancement output to a corresponding failure of the predictor map, under measurable regularity conditions on the reverse-process flow that we verify directly. The factorization survives discretization. On a trained StoRM model, we observe the predictor kink directly, the score-only ablation is smooth, and $K(M)$ is a nearly-isotropic well-conditioned operator that transports the predictor's kink into the output at a rate close to $\sigma_{\max}(K(M)) \approx \sigma_{\min}(K(M))$. The correct intervention for noise-power robustness in this class of architectures is predictor modification (data augmentation, scale-aware training, or predictor-side smoothing), not score-network modification: the diffusion stack faithfully carries what the predictor gives it.

\small
\bibliographystyle{plainnat}

\end{document}